\documentclass[%
 reprint,
 superscriptaddress,
 floatfix,
 amsmath, 
 amssymb,
 aps,
 prx,
]{revtex4-2}

\usepackage{graphicx}
\usepackage{dcolumn}
\usepackage{amsmath, amssymb, amsthm, bm, amsfonts, mathrsfs, bbm}
\usepackage{dsfont}
\usepackage{physics}
\usepackage[colorlinks=true,linkcolor=blue,citecolor=blue,urlcolor=blue]{hyperref}
\usepackage[caption=false]{subfig}
\usepackage[dvipsnames]{xcolor}

\newtheorem{theorem}{Theorem}

\begin{document}

\preprint{APS/123-QED}

\title{Certifying quantum states without independence assumptions}

\author{Mariana Navarro}
\email{mariana.navarro@icfo.eu}
\affiliation{ICFO - Institut de Ciències Fotòniques, The Barcelona Institute of Science and Technology, 08860 Castelldefels, Barcelona, Spain}
\affiliation{Luxquanta Technologies S.L., Av. Joan Carles I, 30, 1º1ª. 08908 L´Hospitalet de Llobregat, Barcelona, Spain}

\author{Leonardo Zambrano}
\email{leonardo.zambrano@icfo.eu}
\affiliation{ICFO - Institut de Ciències Fotòniques, The Barcelona Institute of Science and Technology, 08860 Castelldefels, Barcelona, Spain}

\begin{abstract}
Standard quantum verification and certification protocols often assume that experimental sources emit independent and identically distributed (i.i.d.) states. In realistic scenarios, however, temporal drift, memory effects, feedback, and correlated noise can violate this assumption, causing standard analyses to underestimate uncertainty and overestimate device performance. Here, we introduce a framework for quantum verification and certification that remains valid without independence assumptions. Our method gives rigorous confidence intervals for the time-averaged expectation value of any fixed observable, even when each prepared state may depend on the previous experimental history. For full verification, we recover the standard i.i.d.\ sample-complexity scaling. For certification, we develop a spot-checking protocol that randomly selects a subset of states to certify an average target property of the remaining states, which are used for a parallel quantum task. We demonstrate the framework numerically for energy estimation and entanglement witnessing under drift, and experimentally for Bell-state certification on a quantum processor.
\end{abstract}

\maketitle

\section{Introduction}

% Motivation for certification/verification
Establishing robust statistical guarantees for quantum state preparation is essential for both foundational physics and applied quantum technologies \cite{brunner2014bell, eisert2020quantum, pirandola2020advances}. Since quantum measurements are inherently probabilistic, certification and verification protocols must infer properties of the prepared states from statistics collected over a sequence of $N$ experimental rounds \cite{cramer2010efficient}. The validity of such protocols is therefore fundamentally tied to the statistical model used to describe the sequential state-generation process. If this model is flawed, the resulting confidence guarantees cease to be justified, compromising claims about the reliability or security of applications such as quantum cryptography \cite{pirandola2020advances}, quantum simulation \cite{georgescu2014quantum}, and quantum optimization \cite{farhi2014quantum}.

% Introducing the idea of iid protocols
Traditionally, quantum state certification and verification protocols—including state tomography \cite{hradil1997quantum, guta2020fast, anshu2024survey}, classical shadows \cite{huang2020predicting} and fidelity estimation \cite{flammia2011direct, da2011practical, miguel2022collective, riera2023nondestructive, barbera2025sampling, fawzi2026optimal}—are built on the assumption that the source emits independent and identically distributed (i.i.d.) quantum states. This assumption means that the same state $\rho$ is prepared in every experimental round, with no correlations between successive preparations, so that the joint state over $N$ rounds is $\rho^{\otimes N}$. This tensor-product structure justifies applying standard concentration bounds for independent samples to infer properties of the prepared ensemble \cite{vershynin2018high}.

% The problem
However, in realistic experimental settings, the i.i.d.\ assumption is often violated \cite{schwonnek2021robust, zhang2022device}. The state $\rho_t$ generated in round $t$ may vary over time, depend on uncontrolled latent variables, or depend explicitly on the measurement settings and outcomes of previous rounds. Such history dependence arises naturally in systems with slow environmental bath dynamics \cite{white2022nonmarkovian}, in setups employing active real-time feedback \cite{wiseman2009quantum}, and in sequential generation schemes where the emitted systems are correlated, such as universal photonic emitters \cite{pichler2017universal}. Even in the absence of explicit feedback, non-i.i.d.\ behavior can result from practical experimental imperfections, including continuous parameter drift and gradual miscalibration of optical components over the duration of an experiment \cite{klimov2018fluctuations}. Applying i.i.d.\ statistical tools to such data can underestimate the required sample size and overstate the confidence of the resulting estimates, leading to unsupported guarantees about device performance \cite{van2013quantum, bavaresco2018measurements, Mannalath2025Sharp}.

Recent theoretical and experimental work has recognized the limitations of the i.i.d.\ assumption, but a framework that is both universal and sample-efficient remains an open challenge. In quantum learning, randomized local de Finetti theorems can extend broad classes of learning protocols designed for i.i.d.\ inputs to arbitrary quantum states, generally at the cost of additional samples \cite{fawzi2024learning}. For specific tomographic tasks, this overhead can be avoided: martingale-based analyses show that projected least-squares tomography and classical shadows retain their optimal sample scaling under fully adaptive state preparation, provided the target is redefined as the time-averaged state generated during the experiment \cite{zambrano2026quantum, zambrano2026classical}. However, these methods can be unnecessarily sample-intensive when the task is to estimate a single pre-specified property.

A different limitation arises in certification protocols, where the goal is not to characterize all generated states, but to destructively test only a subset and certify a property of the remaining systems. Existing spot-checking protocols have relaxed the identical-distribution assumption for drifting sources, but many still assume statistical independence between trials \cite{neven2021symmetry, gocanin2022sample, antesberger2026experimental}, leaving their guarantees exposed to correlated noise. Recent approaches based on quantum estimation factors provide powerful tools for non-i.i.d.\ sequential spot-checking \cite{knill2020generation, zhang2026efficient}. These methods are highly general, but applying them requires constructing suitable estimation factors for the chosen property. Here we take a complementary route: we show how standard randomized observable-estimation methods based on bounded unbiased estimators can be lifted to spot-checking certification protocols. In this setting, the same single-shot estimators used for ordinary verification directly yield two-sided confidence intervals for the average observable value of the unmeasured systems.

We develop this observable-level framework by combining randomized unbiased estimators with martingale concentration techniques~\cite{williams1991probability, shah2024introductionmartingales}. The only structural assumption is that each state is fixed before the private measurement choice is made; the state itself may otherwise depend arbitrarily on the previous experimental history. The resulting analysis applies to two regimes. In verification, all generated states are measured to estimate the time-averaged value of a target observable, matching the usual i.i.d. scaling \cite{fawzi2026optimal}. In certification, only a randomly selected subset of states is tested, while the remaining states are used in a parallel quantum task. From the test data, we obtain confidence intervals for the average target property of the unmeasured states.

The paper is organized as follows. Section~\ref{sec: abstract protocol} introduces the non-i.i.d.\ source model and the abstract estimation protocol. Section~\ref{sec:complexity} derives the main martingale bounds for full verification and spot-checking certification. Section~\ref{sec:application} specializes the abstract framework to randomized Pauli estimators for general observables. Section~\ref{sec:numerical} illustrates the certification protocol with numerical examples for energy estimation and entanglement witnessing under drift. Section~\ref{sec:exp} presents an experimental Bell-state certification on an IBM quantum processor. We conclude in Section~\ref{sec:conclusion}.

% The protocol therefore provides quantitative estimates, rather than only threshold decisions, while avoiding the need to destructively measure the states used in the main task. We demonstrate this approach through numerical applications to energy estimation and entanglement witnessing, and experimentally certify Bell-state fidelities on 64 connected physical qubit pairs of an IBM quantum processor.

% The paper is organized as follows: Section~\ref{sec: abstract protocol} introduces the non-i.i.d.\ source model and the abstract estimation protocol. Section~\ref{sec:complexity} derives the main martingale bounds for full verification and spot-checking certification. Section~\ref{sec:application} describes how these bounds are implemented using randomized Pauli estimators for general observables. Section \ref{sec:numerical} presents numerical demonstrations for energy estimation and entanglement witnessing under source drift. In Section \ref{sec:exp}, we experimentally certify Bell-state fidelities. We finally conclude in Section~\ref{sec:conclusion}.

\section{Non-i.i.d. source model and abstract estimator}  \label{sec: abstract protocol}

We consider an adaptive or adversarial experimental scenario in which a source emits a sequence of $N$ quantum states, denoted by $\rho_1, \rho_2, \dots, \rho_N$, each acting on a finite $d$-dimensional Hilbert space. We abandon the standard assumption that these states are independent and identically distributed (i.i.d.). Instead, the state $\rho_t$ prepared at step $t$ may depend arbitrarily on the prior experimental history.

To formalize the non-i.i.d.\ framework \cite{williams1991probability, shah2024introductionmartingales}, we define a probability space $(\Omega, \mathcal{F}, \Pr)$, whose elements specify all possible classical experimental histories, including past preparations, measurement choices and outcomes, as well as uncontrolled or unknown variables (e.g., temperature or environmental degrees of freedom). We introduce a filtration $\{\mathcal{F}_t\}_{t=0}^N$, which is a sequence of sub-$\sigma$-algebras of $\mathcal{F}$ satisfying $\mathcal{F}_0 \subseteq \mathcal{F}_1 \subseteq \dots \subseteq \mathcal{F}_N \equiv \mathcal{F}$. Operationally, $\mathcal{F}_t$ represents the complete classical information available to the adversary, or to the environment, up to the end of round $t$. This information may strictly contain the experimenter's knowledge: it includes all recorded measurement choices and outcomes, together with any additional classical side information held by the adversary.

The sequence of states realized during the experiment forms a predictable process with respect to the filtration, meaning that the density matrix $\rho_t$ is $\mathcal{F}_{t-1}$-measurable. In other words, $\rho_t$ is completely determined by the classical history $\mathcal{F}_{t-1}$. Physically, this reflects the premise that, immediately prior to the measurement at round $t$, the system is described by a well-defined quantum state determined entirely by the realized history.

A natural realization of this history dependence arises in scenarios where the sequential systems share global entanglement. For instance, the experiment may consist of sequential local measurements performed on the subsystems of a single, globally entangled multipartite state $\rho^{(N)}$. After measuring the first $t-1$ subsystems, the effective state of the $t$-th subsystem is given by the conditional reduced density matrix
\begin{align} 
    \rho_{t}
    =
    \frac{
    \tr_{\setminus t} \left[
    \left(E_{k_1} \otimes \dots \otimes E_{k_{t-1}} \otimes \mathds{1}_{t \dots N}\right) \rho^{(N)}
    \right]
    }{
    \Pr(k_1,\dots,k_{t-1})
    },
\end{align}
where $E_{k_i}$ are the POVM elements associated with the realized classical outcomes $k_1,\dots,k_{t-1}$, $\Pr(k_1,\dots,k_{t-1})$ is the probability of the observed history, and $\tr_{\setminus t}$ denotes the partial trace over all subsystems except the $t$-th.

Our objective is to estimate the expectation value of a Hermitian observable $O=\alpha_I\mathds{1}+W$, where $\tr(W)=0$ and $\alpha_I \in \mathds{R}$. Since the identity part is deterministic, the only statistical task is to estimate
\begin{align}
    \omega_t=\tr(W\rho_t).
\end{align}
Then, the full expectation value is recovered as $\tr(O\rho_t) = \alpha_I + \omega_t$.

To estimate $\omega_t$ on a measured round, the experimenter relies on a trusted, randomized measurement procedure. The experimenter privately selects a measurement setting independently of the adversary's history $\mathcal{F}_{t-1}$ and records the resulting quantum outcome. This joint process of randomly choosing a setting and performing the corresponding measurement can be described by a single effective POVM $\{E_k\}_{k=1}^M$, where the index $k$ encompasses both the classical setting choice and the physical outcome. While the adversary may know the overall POVM description and tailor $\rho_t$ accordingly, they cannot predict the setting choice or influence the physical apparatus. Because $\rho_t$ is fixed conditional on $\mathcal{F}_{t-1}$, and the setting choice is independent of $\mathcal{F}_{t-1}$, the outcome $k$ at round $t$ is governed exclusively by
\begin{align}
\Pr(k \mid \mathcal{F}_{t-1}) = \tr(\rho_t E_{k}).
\end{align}

Based on the observed outcome $k$, the protocol assigns a single-shot classical estimate $X(k)$. We require this estimator to satisfy two fundamental properties: unbiasedness and boundedness.

First, for any fixed quantum state $\rho$, the estimator must be unbiased, meaning that its expectation over the generalized measurement outcome yields the target value
\begin{align}
\mathbb{E}_{k\sim p(\rho)}[X(k)] = \sum_{k=1}^M \tr(\rho E_k)\, X(k) = \tr(W\rho).
\end{align}
When applied to the state $\rho_t$ at round $t$, this unbiasedness guarantees conditional unbiasedness. Namely, if $X_t=X(k_t)$ denotes the estimate obtained from the realized outcome $k_t$, then
\begin{align} \label{eq:exp_val_Xt_past}
\mathbb{E}[X_t \mid \mathcal{F}_{t-1}] = \sum_{k=1}^M \Pr(k \mid \mathcal{F}_{t-1}) X(k) = \omega_t.
\end{align}
Additionally, the single-shot estimate must be bounded within a known interval $[a, b]$, such that $X_t \in [a, b]$.

Depending on the task, the experimenter may choose to measure all generated states, corresponding to verification, or only a randomly selected subset of them, corresponding to certification.

\section{Confidence bounds} \label{sec:complexity}

To derive confidence intervals without assuming independence, we rely on the framework of martingales. Formally, a sequence of real-valued random variables $\{Z_t\}_{t=1}^N$ forms a martingale difference sequence with respect to a filtration $\{\mathcal{F}_t\}_{t=0}^N$ if each $Z_t$ is $\mathcal{F}_t$-measurable, $\mathbb{E}[|Z_t|] < \infty$, and
\begin{align}
\mathbb{E}[Z_t \mid \mathcal{F}_{t-1}] = 0 \quad \text{for all } t \ge 1.
\end{align}
If we define the deviation of our single-shot estimate from the true expectation value as $Z_t = X_t - \omega_t$, then this sequence forms a martingale difference sequence due to Eq.~\eqref{eq:exp_val_Xt_past}.

The sequence of partial sums $M_n = \sum_{t=1}^n Z_t$, with $M_0=0$, then forms a martingale. Martingales allow us to bound the accumulation of random fluctuations using the Azuma-Hoeffding inequality.

\begin{theorem}[Azuma-Hoeffding~\cite{raginsky2015concentration}] \label{thm:azuma}
Let $\{M_t\}_{t=1}^N$ be a martingale with respect to a filtration $\{\mathcal{F}_t\}_{t=1}^N$. If there exist predictable processes $\{A_t\}_{t=1}^N$ and $\{B_t\}_{t=1}^N$, and constants $0<c_1, \dots, c_N <\infty$, such that
\begin{align}
    A_t \le M_t - M_{t-1} \le B_t
    \quad \text{and} \quad
    B_t - A_t \leq c_t
\end{align}
almost surely for all $t=1,\dots,N$, then for any $\epsilon > 0$,
\begin{align}
\Pr (|M_N - M_0| \ge \epsilon)
\le
2 \exp\left( -\frac{2\epsilon^2}{\sum_{t=1}^N c_t^2} \right).
\end{align}
\end{theorem}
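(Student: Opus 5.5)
We prove the Azuma--Hoeffding inequality by the standard Chernoff (exponential-moment) method applied to the martingale increments $Z_t = M_t - M_{t-1}$. We bound the upper tail $\Pr(M_N - M_0 \ge \epsilon)$ and obtain the lower tail by applying the same argument to the martingale $-M_t$. The predictable bounds $-B_t \le -Z_t \le -A_t$ have the same width, so the lower tail obeys the same bound, and a union bound gives the factor $2$.

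\textbf{Step 1: conditional Hoeffding lemma.} We claim that for every $\lambda \in \mathbb{R}$,
\begin{align}
\mathbb{E}\left[e^{\lambda Z_t} \mid \mathcal{F}_{t-1}\right] \le \exp\left(\frac{\lambda^2 c_t^2}{8}\right)
\end{align}
almost surely. Conditional on $\mathcal{F}_{t-1}$, the endpoints $A_t, B_t$ are fixed because they are predictable. The increment $Z_t$ then takes values in $[A_t, B_t]$ and has conditional mean zero. We apply Hoeffding's lemma pointwise in the conditioning, using convexity of $x \mapsto e^{\lambda x}$ on $[A_t,B_t]$:
\begin{align}
e^{\lambda Z_t} \le \frac{B_t - Z_t}{B_t - A_t} e^{\lambda A_t} + \frac{Z_t - A_t}{B_t - A_t} e^{\lambda B_t}.
\end{align}
Taking conditional expectations and using $\mathbb{E}[Z_t\mid\mathcal{F}_{t-1}]=0$ leaves a function $e^{L(h)}$ with $h = \lambda(B_t - A_t)$. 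A second-order Taylor bound gives $L(h) \le h^2/8$. Finally, $B_t - A_t \le c_t$ yields the claim. When $A_t = B_t$ we have $Z_t = 0$ and the bound is trivial.

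This is the main technical step. The only subtlety is the measurability of the pointwise application: the convexity inequality holds pathwise, so taking conditional expectations of both sides is legitimate.

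\textbf{Step 2: iterate via the tower property.} For $\lambda > 0$, we write
\begin{align}
\mathbb{E}\left[e^{\lambda M_N - \lambda M_0}\right] = \mathbb{E}\left[e^{\lambda (M_{N-1}-M_0)}\, \mathbb{E}\left[e^{\lambda Z_N}\mid \mathcal{F}_{N-1}\right]\right],
\end{align}
which holds because $M_{N-1}$ is $\mathcal{F}_{N-1}$-measurable. Applying Step 1 and inducting downward in $t$ gives
\begin{align}
\mathbb{E}\left[e^{\lambda(M_N - M_0)}\right] \le \exp\left(\frac{\lambda^2}{8}\sum_{t=1}^N c_t^2\right).
\end{align}
Here we need the constants $c_t$ to be deterministic, which the statement assumes. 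This is what lets them factor out of each expectation.

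\textbf{Step 3: Markov's inequality and optimization over $\lambda$.} By Markov's inequality,
\begin{align}
\Pr(M_N - M_0 \ge \epsilon) \le \exp\left(-\lambda\epsilon + \frac{\lambda^2}{8}\sum_{t=1}^N c_t^2\right).
\end{align}
Choosing $\lambda = 4\epsilon/\sum_t c_t^2$ gives the bound $\exp\!\left(-2\epsilon^2/\sum_t c_t^2\right)$. Combining this with the symmetric lower-tail bound completes the argument.
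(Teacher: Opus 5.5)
Your argument is correct. It is the standard Chernoff-bound proof with a conditional Hoeffding lemma, which is how the cited reference (Raginsky--Sason) proves this inequality; the paper itself gives no proof and only imports the result by citation. One detail is worth stating explicitly: $A_t \le 0 \le B_t$ almost surely, because predictability gives $A_t \le \mathbb{E}[Z_t\mid\mathcal{F}_{t-1}] \le B_t$ and the conditional mean vanishes. This places the weight $-A_t/(B_t-A_t)$ in $[0,1]$, which your bound $L(h)\le h^2/8$ needs, and it also gives $|Z_t|\le c_t$, so every exponential moment you manipulate is finite.
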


\subsection{Quantum Verification} \label{sec: verification protocol}

In the verification protocol, the experimenter consumes all $N$ states $\{\rho_1, \dots, \rho_N\}$ generated by the source to estimate the average target property of the realized sequence,
\begin{align}
    \bar{\omega} = \frac{1}{N} \sum_{t=1}^N \omega_t.
\end{align}
To achieve this, the protocol computes the empirical average of the recorded single-shot estimators introduced in Section~\ref{sec: abstract protocol}
\begin{align}
    \bar{X} = \frac{1}{N} \sum_{t=1}^N X_t.
\end{align}

\begin{theorem}[Verification] \label{thm:verification}
For single-shot estimators $X_t \in [a, b]$, the empirical average $\bar{X}$ satisfies 
\begin{align}
    |\bar{X} - \bar{\omega}| \leq \epsilon 
\end{align}
with probability at least $1-\delta$, provided that
\begin{align}\label{eq:sample complexity verification}
    N \geq \frac{(b-a)^2}{2 \epsilon^2}\ln\left(\frac{2}{\delta}\right).
\end{align}
\end{theorem}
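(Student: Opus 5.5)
We apply the Azuma--Hoeffding inequality (Theorem~\ref{thm:azuma}) to the martingale $M_n=\sum_{t=1}^n Z_t$ built from the differences $Z_t = X_t-\omega_t$. First we check that this is a legitimate martingale with respect to $\{\mathcal{F}_t\}$. The estimate $X_t=X(k_t)$ is $\mathcal{F}_t$-measurable, since the outcome $k_t$ is part of the history at the end of round $t$. The quantity $\omega_t=\tr(W\rho_t)$ is $\mathcal{F}_{t-1}$-measurable because $\rho_t$ is predictable. Integrability holds because $X_t\in[a,b]$ and $|\omega_t|\le\|W\|_\infty$. The zero-conditional-mean property is exactly Eq.~\eqref{eq:exp_val_Xt_past}, so $\mathbb{E}[Z_t\mid\mathcal{F}_{t-1}]=0$.

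Next we supply the predictable bounds that Theorem~\ref{thm:azuma} requires. Set $A_t = a-\omega_t$ and $B_t=b-\omega_t$. Both are $\mathcal{F}_{t-1}$-measurable because $\omega_t$ is, so they are predictable. Since $X_t\in[a,b]$, we have $A_t\le M_t-M_{t-1}=Z_t\le B_t$ almost surely, and $B_t-A_t=b-a=:c_t$ for every $t$. This step is the main conceptual point. The increment $Z_t$ is not confined to a fixed interval of width $b-a$, because $\omega_t$ may drift with the history. It is confined to an interval of width $b-a$ whose location is random but known one step ahead, and this is exactly why the predictable-bounds version of Azuma--Hoeffding is used rather than the constant-bounds version.

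Applying Theorem~\ref{thm:azuma} with $M_0=0$ and $\sum_t c_t^2 = N(b-a)^2$ gives, for any $\epsilon'>0$,
\begin{align}
\Pr\left(\left|\sum_{t=1}^N (X_t-\omega_t)\right|\ge \epsilon'\right)\le 2\exp\left(-\frac{2\epsilon'^2}{N(b-a)^2}\right).
\end{align}
Choosing $\epsilon'=N\epsilon$ and noting that $\sum_t(X_t-\omega_t)=N(\bar X-\bar\omega)$, we obtain
\begin{align}
\Pr(|\bar X-\bar\omega|\ge\epsilon)\le 2\exp\left(-\frac{2N\epsilon^2}{(b-a)^2}\right).
\end{align}
We want the right-hand side to be at most $\delta$. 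Taking logarithms, this holds if and only if $N\ge \frac{(b-a)^2}{2\epsilon^2}\ln(2/\delta)$, which is Eq.~\eqref{eq:sample complexity verification}. On the complementary event, whose probability is at least $1-\delta$, we have $|\bar X-\bar\omega|<\epsilon$, which is the claimed bound.

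The only possible subtlety is measurability. The claim $\Pr(k\mid\mathcal{F}_{t-1})=\tr(\rho_tE_k)$ is taken as given from the model in Section~\ref{sec: abstract protocol}, so nothing beyond the checks above needs to be proved.
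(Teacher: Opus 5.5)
Your proposal is correct and follows the paper's own proof. It uses the same martingale differences $Z_t = X_t - \omega_t$, the same predictable bounds $A_t = a-\omega_t$, $B_t = b-\omega_t$ with $B_t - A_t = b-a$, and applies Azuma--Hoeffding at deviation $N\epsilon$; your explicit checks of adaptedness, integrability and predictability are details the paper leaves implicit and do not change the argument.
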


\begin{proof}
We define the difference sequence $Z_t$ as the deviation of the single-shot estimator from the true value at step $t$, such that
\begin{align}
    Z_t = X_t - \omega_t.
\end{align}
The conditional expectation vanishes, $\mathbb{E}[Z_t \mid \mathcal{F}_{t-1}] = 0$. Consequently, the sequence $M_n = \sum_{t=1}^n Z_t$, with $M_0 = 0$, forms a martingale.

Since $X_t \in [a,b]$ and the true expectation value $\omega_t$ is $\mathcal{F}_{t-1}$-measurable, the martingale difference is bounded as
\begin{align}
A_t = a - \omega_t \le Z_t \le b - \omega_t = B_t.
\end{align}
Crucially, the $\omega_t$ dependence cancels, yielding 
\begin{align}
B_t - A_t = b - a = c_t.
\end{align}
Applying Theorem \ref{thm:azuma} to $M_N$, we bound the probability that the empirical average $\bar{X} = \frac{1}{N}\sum_{t=1}^N X_t$ deviates from the true average $\bar{\omega}$ by more than $\epsilon$,
\begin{align}
    \Pr\left( \left| \bar{X} - \bar{\omega} \right| \ge \epsilon \right)
    &= \Pr\left( |M_N| \ge N\epsilon \right) \nonumber \\
    &\le 2 \exp\left( -\frac{2 N\epsilon^2}{(b-a)^2} \right).
\end{align}

Requiring this upper bound to be at most $\delta$ gives
\begin{align}
N \geq \frac{(b-a)^2}{2 \epsilon^2}\ln\left(\frac{2}{\delta}\right).
\end{align}
Therefore, this number of measurements is sufficient to estimate $\bar{\omega}$ up to additive error $\epsilon$ with probability at least $1-\delta$.
\end{proof}

\subsection{Quantum Certification}

In many protocols, state verification is not the primary task, but a subroutine used to ensure that the prepared states possess the properties required by a parallel protocol or algorithm. Because quantum measurements are generally destructive, full verification consumes the very resource it is meant to guarantee. To retain a usable output, the experimenter must instead test a randomly selected subset of the states, using those outcomes to rigorously certify the target property of the unmeasured states \cite{gocanin2022sample, antesberger2026experimental, knill2020generation, zhang2026efficient}.

At each step $t$, prior to interacting with the state $\rho_t$, the experimenter generates a private classical coin flip $c_t \in \{0, 1\}$ with $\Pr(c_t = 1) = p$. Crucially, this choice is independent of the adversary's history $\mathcal{F}_{t-1}$, and hence independent of the prepared state $\rho_t$. Depending on the value of $c_t$, the protocol proceeds in one of two ways:
\begin{enumerate}
    \item[(i)] \textit{Test round ($c_t = 1$):} The state $\rho_t$ is measured using the trusted effective POVM $\{E_k\}_{k=1}^M$ introduced in Section~\ref{sec: abstract protocol}. The experimenter records the outcome $k$ and computes the unbiased single-shot estimator $X_t(k)$.
    \item[(ii)] \textit{Use round ($c_t = 0$):} The state $\rho_t$ is used by the parallel quantum task. No estimator is generated for this round.
\end{enumerate}

Let $S_{T} = \{t  \mid c_t = 1\}$ and $S_U = \{t  \mid c_t = 0\}$ denote the disjoint, random sets of indices corresponding to the tested and used states, respectively, satisfying $|S_T| + |S_U| = N$. Because $\rho_t$ is completely determined by $\mathcal{F}_{t-1}$, its true expectation value $\omega_t = \tr(W \rho_t)$ is a well-defined quantity regardless of the experimenter's subsequent choice $c_t$.

Our objective is to bound the total property of the unmeasured systems, $\sum_{t \in S_U} \omega_t$, using only the classical data $\{X_t\}_{t \in S_T}$ collected during the test rounds.

\begin{theorem}[Certification] \label{thm:certification}
Let $W_{\min}$ and $W_{\max}$ be lower and
upper bounds on the possible expectation values of $W$, and let $[a,b]$ be a bounding interval for the single-shot
estimators $X_t$ on tested rounds.  Define
\begin{align}
    \Delta_p = \max\left(W_{\max}, -\frac{1-p}{p} a\right) - \min\left(W_{\min}, -\frac{1-p}{p} b\right) \nonumber
\end{align}

Then, with probability at least \(1-\delta\), the following holds: if
\(|S_U|>0\), then
\begin{align} \label{eq: certification bound thrm 3}
 \left| \frac{1}{|S_U|}\sum_{t \in S_U} \omega_t
 - \frac{1-p}{p|S_U|} \sum_{t \in S_T} X_t \right|
 \le
 \frac{\Delta_p}{|S_U|}\sqrt{\frac{N}{2}\ln\qty(\frac{2}{\delta})}.
\end{align}
\end{theorem}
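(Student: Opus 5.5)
We build a single martingale difference sequence that couples the test and use rounds, and then apply Theorem~\ref{thm:azuma} to it. For each round $t$ we define
\begin{align}
    Z_t = (1-c_t)\,\omega_t - c_t\,\frac{1-p}{p}\,X_t .
\end{align}
We enlarge the filtration so that $c_t$ and the round-$t$ outcome are $\mathcal{F}_t$-measurable. Since $\omega_t$ is $\mathcal{F}_{t-1}$-measurable and $c_t$ is independent of $\mathcal{F}_{t-1}$ with $\Pr(c_t=1)=p$, we get $\mathbb{E}[(1-c_t)\omega_t\mid\mathcal{F}_{t-1}]=(1-p)\omega_t$. For the second term, conditional on $c_t=1$ and $\mathcal{F}_{t-1}$, the outcome is distributed as $\tr(\rho_t E_k)$, so Eq.~\eqref{eq:exp_val_Xt_past} gives $\mathbb{E}[c_tX_t\mid\mathcal{F}_{t-1}]=p\,\omega_t$. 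The second term therefore contributes $-(1-p)\omega_t$, and $\mathbb{E}[Z_t\mid\mathcal{F}_{t-1}]=0$. Hence $M_N=\sum_t Z_t=\sum_{t\in S_U}\omega_t-\frac{1-p}{p}\sum_{t\in S_T}X_t$ is a martingale with $M_0=0$. Integrability is immediate from boundedness.

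Next we find predictable bounds $A_t\le Z_t\le B_t$. The variable $Z_t$ takes one of two forms. On a use round it equals $\omega_t\in[W_{\min},W_{\max}]$. On a test round it equals $-\frac{1-p}{p}X_t\in[-\frac{1-p}{p}b,\,-\frac{1-p}{p}a]$. Because $c_t$ is not known at time $t-1$, the predictable bounds must cover both cases. The natural choice is the constants
\begin{align}
    A_t=\min\!\left(W_{\min},-\tfrac{1-p}{p}b\right),\qquad B_t=\max\!\left(W_{\max},-\tfrac{1-p}{p}a\right).
\end{align}
These are trivially predictable and satisfy $B_t-A_t=\Delta_p=c_t$. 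This is the step I regard as the crux, namely recognizing that taking the union of the two ranges is exactly what produces $\Delta_p$. A sharper option would be to use $\omega_t$ itself, which is predictable, in place of $W_{\min}$ and $W_{\max}$. That would not remove the max/min structure, so the statement's constant bound suffices.

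Theorem~\ref{thm:azuma} then gives $\Pr(|M_N|\ge\epsilon)\le 2\exp(-2\epsilon^2/(N\Delta_p^2))$. Setting the right-hand side equal to $\delta$ yields $\epsilon=\Delta_p\sqrt{\frac{N}{2}\ln(2/\delta)}$. Thus with probability at least $1-\delta$ we have $|M_N|\le\Delta_p\sqrt{\frac N2\ln(2/\delta)}$. On this event, whenever $|S_U|>0$ we may divide by the random quantity $|S_U|$ to obtain Eq.~\eqref{eq: certification bound thrm 3}. No union bound over the value of $|S_U|$ is needed, because the event is defined before the division and the division is deterministic once $S_U$ is realized.
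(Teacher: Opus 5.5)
Your proposal is correct and follows the paper's proof essentially step for step. It uses the same Horvitz--Thompson-type increment $Z_t=(1-c_t)\omega_t-\frac{1-p}{p}c_tX_t$ (the paper's $V_t$), the same conditional-unbiasedness computation via independence of the private coin from $\mathcal{F}_{t-1}$, the same constant bounds $\min(W_{\min},-\frac{1-p}{p}b)\le Z_t\le\max(W_{\max},-\frac{1-p}{p}a)$ giving range $\Delta_p$, and then Azuma--Hoeffding followed by division by $|S_U|$ on the high-probability event; your explicit enlargement of the filtration so that $c_t$ and the round-$t$ outcome are $\mathcal{F}_t$-measurable is a minor formal point the paper leaves implicit.
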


Crucially, the states in $S_U$ must be consumed by the parallel task immediately at round $t$. Theorem~\ref{thm:certification} bounds the aggregate property $\sum_{t \in S_U} \omega_t$ at the moment the states are generated. If these states are stored for later use, they may become vulnerable to subsequent decoherence or to adversarial noise once the test locations are revealed.

\begin{proof}

To relate the tested data to the unmeasured states, we introduce the random variable \cite{horvitz1952generalization, knill2020generation,  zhang2026efficient}
\begin{align}
V_t = (1 - c_t) \omega_t - \frac{1-p}{p} c_t X_t.
\end{align}
Here, $X_t$ denotes the single-shot estimator associated with the test measurement at round $t$, introduced in Section~\ref{sec: abstract protocol}. It is physically observed only when $c_t=1$. Since it appears multiplied by $c_t$, the quantity $c_tX_t$ is well defined from the observed data, with the convention that $c_t X_t=0$ on use rounds.

We evaluate the expectation of $V_t$ conditioned on $\mathcal{F}_{t-1}$ by averaging over both the private coin $c_t$ and the measurement outcome on test rounds. Thus,
\begin{align}
\mathbb{E}[V_t |\mathcal{F}_{t-1}] &=
\mathbb{E}\left[(1-c_t)\omega_t \mid \mathcal{F}_{t-1}\right]
- \frac{1-p}{p} \mathbb{E}\left[c_t X_t \mid \mathcal{F}_{t-1}\right]
\nonumber \\
&= (1-p)\omega_t - \frac{1-p}{p} \, p\, \mathbb{E}\left[X_t \mid \mathcal{F}_{t-1}, c_t=1\right] \nonumber \\
&= (1-p)\omega_t - \frac{1-p}{p}p\,\omega_t =0.
\end{align}
The third line follows because the coin flip is independent of $\mathcal{F}_{t-1}$, and hence of $\rho_t$. Moreover, on the event $c_t=1$, the test measurement is conditionally unbiased, so that $\mathbb{E}[X_t \mid \mathcal{F}_{t-1}, c_t=1] = \omega_t$.

Thus, the sequence of partial sums $M_n = \sum_{t=1}^n V_t$, with $M_0=0$, forms a martingale. By definition, $M_N$ compares the aggregate property of the unmeasured states with the scaled sum of the test-round estimators:
\begin{align}
M_N = \sum_{t \in S_U} \omega_t - \frac{1-p}{p} \sum_{t \in S_T} X_t.
\end{align}

To apply Theorem~\ref{thm:azuma}, we must bound the martingale difference $V_t$. If $c_t=0$, then $V_t = \omega_t = \tr (W \rho_t)$. Since $W$ is Hermitian, $V_t \in  [W_{\min}, W_{\max}]$, where $W_{min}$ and $W_{max}$ are the minimum and maximum eigenvalues of $W$, respectively. If the exact spectral range is not available, one may use any valid looser bounds instead, at the cost of a wider confidence interval. If $c_t=1$, then $V_t = -\frac{1-p}{p} X_t \in [-\frac{1-p}{p} b, -\frac{1-p}{p} a]$. Thus, $V_t$ lies in the interval $[A, B]$, where
\begin{align}
A &= \min\left(W_{\min}, -\frac{1-p}{p} b\right), \\
B &= \max\left(W_{\max}, -\frac{1-p}{p} a\right).
\end{align}
The maximum range of $V_t$ is therefore $\Delta_p = B - A$.

Applying Theorem~\ref{thm:azuma}, the probability that the martingale sum deviates by more than $N\epsilon$ is bounded by
\begin{align}
\Pr(|M_N| \ge N\epsilon)
&\le
2 \exp\left( -\frac{2 N \epsilon^2}{\Delta_p^2} \right).
\end{align}
Requiring this upper bound to be at most $\delta$ gives
\begin{align}
\epsilon = \Delta_p \sqrt{\frac{\ln(2/\delta)}{2N}}.
\end{align}
Therefore, with probability at least $1-\delta$,
\begin{align}
\left|
\sum_{t \in S_U} \omega_t
-
\frac{1-p}{p} \sum_{t \in S_T} X_t
\right|
\le
N \Delta_p \sqrt{\frac{\ln(2/\delta)}{2N}}.
\end{align}

Provided that at least one state is used in the parallel protocol ($|S_{U}| > 0$), we can divide this bound by the actual number of used states $|S_{U}|$, yielding a confidence interval for the average property of the untested states $\frac{1}{|S_{U}|} \sum_{t \in S_{U}} \omega_t$.

\end{proof}

\section{Pauli estimation of observables}\label{sec:application}

The framework introduced above applies to any task that can be reduced to estimating a sequence of expectation values $\tr(O \rho_t)$
for a fixed Hermitian observable $O$ under a non-i.i.d.\ preparation. 
An $n$-qubit Hermitian operator $O$ can be decomposed in the Pauli basis $\mathcal{P}_n=\{I,X,Y,Z\}^{\otimes n}$ as
\begin{align}\label{eq:O generic}
   O = \sum_{P \in \mathcal{P}_n} \alpha_P P,
   \qquad
   \alpha_P=\frac{\tr(OP)}{d}.
\end{align}
The identity term $\alpha_I\mathds{1}$ contributes deterministically to $\tr(O \rho_t)$ and is therefore added analytically. The randomized measurement protocol is applied only to the traceless part
\begin{align}
    W = \sum_{P\in\mathcal{P}_n\setminus\{I\}} \alpha_P P .
\end{align}
For $W\neq 0$, we define the importance sampling distribution
\begin{align}
    \pi(P)=\frac{|\alpha_P|}{\|\alpha\|_1},
    \qquad
    \|\alpha\|_1=\sum_{P\in\mathcal{P}_n\setminus\{I\}}|\alpha_P|.
\end{align}

At each round $t$, the experimenter samples a non-identity Pauli observable $P_t$ according to $\pi(P)$ and performs the projective measurement $(\mathds{1}\pm P_t)/2$. Equivalently, the joint choice-and-outcome POVM has elements
\begin{align}
    E^{P}_{y}
    =
    \frac{\pi(P)}{2}
    \left(\mathds{1}+y P\right),
    \qquad
    y\in\{+1,-1\}.
\end{align}
The single-shot estimator for the traceless contribution is
\begin{align}
    X_t=\|\alpha\|_1\, y_t\, \operatorname{sgn}(\alpha_{P_t}).
\end{align}
It satisfies $\mathbb{E}[X_t\mid \mathcal{F}_{t-1}] =  \omega_t$ and is bounded as $|X_t| \leq \|\alpha\|_1$. This allows us to use the theorems derived in Section~\ref{sec:complexity} to estimate the average target property
\begin{align}
    \bar{\omega} = \frac{1}{|S|} \sum_{t \in S} \omega_t.
\end{align}

In the verification setting, we have $S = \{1, \dots, N\}$, since all $N$ states are measured. Applying Theorem~\ref{thm:verification} with interval width $b - a = 2\|\alpha\|_1$, we obtain that, to estimate $\bar{\omega}$ up to an additive error $\epsilon$ with probability at least $1 - \delta$, it is sufficient to take
\begin{align}
    N \geq \frac{2\|\alpha\|_1^2}{\epsilon^2} \ln\left(\frac{2}{\delta}\right).
\end{align}

In the certification setting, each state is tested independently with probability $p$, producing a random test set $S_T$ and a complementary use set $S_U$. Using Theorem~\ref{thm:certification}, the test data bounds the average property of the unmeasured states, with
\begin{align}
    \Delta_p = \max &\qty( W_{\max},\frac{1-p}{p} \norm{\alpha}_1) \nonumber\\
    &- \min \qty(W_{\min},-\frac{1-p}{p} \norm{\alpha}_1).
\end{align}
Since $|S_U|$ concentrates around $(1-p)N$, for large $N$, the typical certification error scales as
\begin{align}
    \epsilon
    \approx
    \frac{\Delta_p}{1-p}
    \sqrt{\frac{\ln(2/\delta)}{2N}},
\end{align}
so achieving additive error $\epsilon$ typically requires
\begin{align}
    N
    \gtrsim
    \frac{\Delta_p^2}{2(1-p)^2\epsilon^2}
    \ln\left(\frac{2}{\delta}\right).
\end{align}

\section{Numerical demonstrations} \label{sec:numerical}

The preceding sections establish bounds for both verification and certification. In the following numerical examples, we focus on the certification setting, where only a randomly selected subset of states is measured and the goal is to certify the average property of the unmeasured states. We first introduce a simple non-i.i.d.\ source model based on coherent drift, and then apply the certification protocol to representative observables arising in energy estimation and entanglement witnessing.

\begin{figure*}
\centering
\subfloat[$H_{TFIM}$ \label{fig:HTFIM_different_p}]{%
    \includegraphics[width=0.48\textwidth]{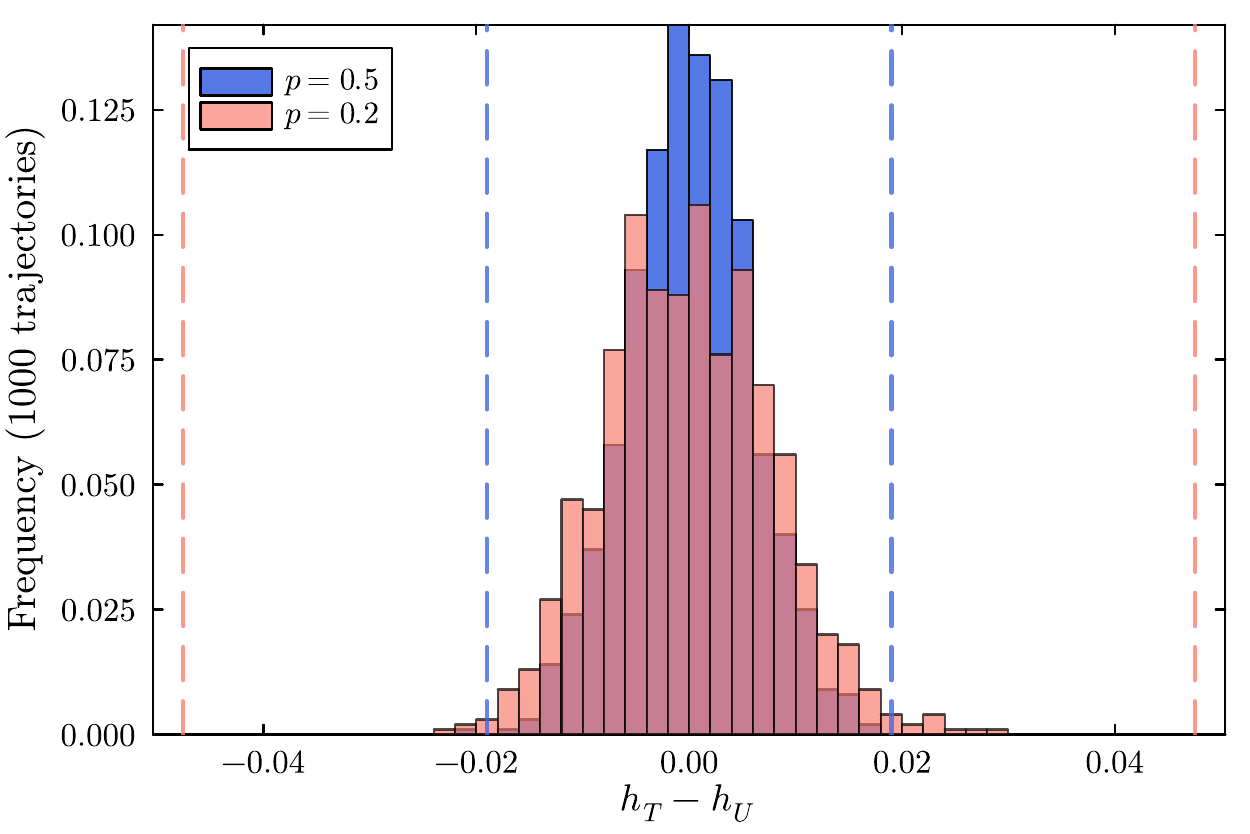}
}
\hfill
\subfloat[$H_{XXZ}$ \label{fig:HXXZ_different_p}]{%
    \includegraphics[width=0.48\textwidth]{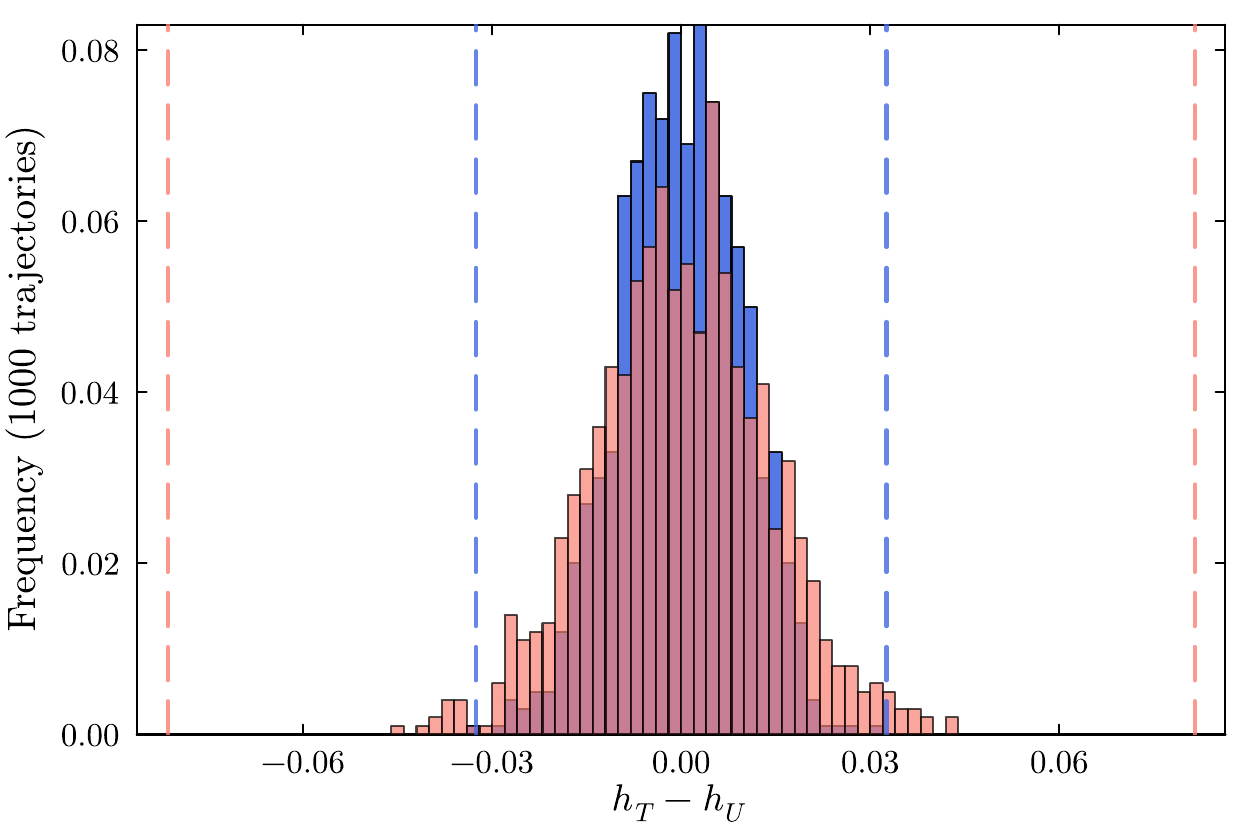}
}
\caption{
Certification of average energies for non-i.i.d.\ drifting sources. 
Panels (a) and (b) correspond to 3-qubit ground states of $H_{\rm TFIM}$ and $H_{\rm XXZ}$, respectively, evolved under the drift model of Section~\ref{sec:drift}. 
The histograms show the certification error $h_T-h_U$ over 1000 independent trajectories with $N=10^6$ total rounds. 
Here, $h_T$ is the estimate obtained from the tested states, while $h_U$ is the exact average energy of the unmeasured states. 
Blue and red correspond to test probabilities $p=0.5$ and $p=0.2$. 
Dashed vertical lines indicate the confidence bounds from Theorem~\ref{thm:certification} with $\delta=0.05$.
}
\label{fig:different_p}
\end{figure*}

\subsection{History-dependent drift model} \label{sec:drift}

Many experiments suffer from continuous parameter drift in the prepared state, caused for instance by phase, polarization, or wavelength fluctuations, as well as gradual miscalibration over the duration of a run \cite{klimov2018fluctuations}. We model this non-i.i.d.\ behavior by applying a time-dependent coherent error to the target state. For each experimental realization, we draw a random Hermitian operator $H_{\rm drift}$, which remains fixed throughout the full trajectory, and let the error angle evolve as a Gaussian random walk, $\theta_t = \theta_{t-1} + \delta_t$, where  $\delta_t \sim \mathcal{N}(0,\nu^2)$ and $\nu$ is the drift rate. The state prepared at round $t$ is
\begin{equation}
    \rho_t = U_t \sigma U_t^\dagger,
    \qquad
    U_t = \exp(-i\theta_t H_{\rm drift}),
\end{equation}
with $\sigma$ the target state. The sequence $\{\rho_1,\dots,\rho_N\}$ is non-i.i.d. since $\theta_t$ depends on $\theta_{t-1}$.

\subsection{Energy estimation}

We consider the certification of energy in many-body quantum simulations. Our goal is to test whether the energy estimated from the measured test set reliably predicts the average energy of the unmeasured use set. We focus on ground-state preparations for two one-dimensional Hamiltonians: the transverse-field Ising model
\begin{align}
    H_{\rm TFIM}
    =
    J\left(\sum_i Z_i Z_{i+1} + g\sum_i X_i\right),
\end{align}
and the XXZ spin chain with open boundary conditions,
\begin{align}
    H_{\rm XXZ}
    =
    J\sum_i
    \left(
    X_i X_{i+1}
    +
    Y_i Y_{i+1}
    +
    \Delta Z_i Z_{i+1}
    \right).
\end{align}
Here, $J>0$ is the exchange coupling, $g$ is the transverse-field strength, and $\Delta$ is the anisotropy parameter. Since both Hamiltonians are traceless Pauli sums, they can be used directly as the target observable, $W=H$. In this context, the true energy of any generated state $\rho_t$ is strictly bounded by the spectrum of the Hamiltonian, such that $\omega_t \in [H_{min}, H_{max}]$.

To validate the protocol numerically, we apply it to a 3-qubit ground state preparation generated according to the drift model of Section~\ref{sec:drift}, with $\theta_0=0$ and $\nu=0.01$. For each Hamiltonian, the target state $\sigma$ is chosen to be its ground state. We additionally make the choice of parameters $J=1$, $g =0.5$ and $\Delta=1$. We generate a sequence of $N=10^6$ total rounds, use failure probability $\delta=0.05$, and repeat the certification protocol over 1000 independent trajectories. In each round, the state is tested with probability $p$. The test outcomes are used to construct the certification estimate
\begin{align}
    h_T
    =
    \frac{1-p}{p|S_U|}
    \sum_{t\in S_T} X_t.
\end{align}
For benchmarking purposes, we also compute the exact average energy of the unmeasured states,
\begin{align}
    h_U
    =
    \frac{1}{|S_U|}
    \sum_{t\in S_U}\tr(H\rho_t),
\end{align}
which serves as the reference value.

Figure~\ref{fig:different_p} shows the distribution of the certification error $h_T-h_U$ for test probabilities $p=0.5$ and $p=0.2$. The dashed lines denote the corresponding theoretical bounds from Theorem~\ref{thm:certification}, evaluated using the right-hand side of Eq.~\eqref{eq: certification bound thrm 3}. As expected, decreasing the test probability $p$ increases the certification error. In all cases, the fraction of trials falling outside the confidence interval remains below the chosen failure probability $\delta=0.05$. Specifically, we observe out-of-bound frequencies of at most $0.1\%$ across both Hamiltonian configurations.

\subsection{Entanglement witnessing}

Another key application of our protocol is entanglement certification. 
Suppose the goal is to prepare the partially entangled state
\begin{align}
    \ket{\phi}
    =
    \frac{\sqrt{3}}{2}\ket{00}
    +
    \frac{1}{2}\ket{11}.
\end{align}
A suitable entanglement witness for this state is
\begin{align}
    \mathcal{W}
    =
    \frac{3}{4}\mathds{1}
    -
    \ketbra{\phi}.
\end{align}
Indeed, $\tr(\mathcal{W}\sigma)\geq 0$ for every separable state $\sigma$. Hence, a state is certified as entangled whenever $\tr(\mathcal{W}\sigma)<0$. The target observable is therefore the witness $\mathcal{W}$. Since $\mathcal{W}$ is not traceless, we decompose it as $\mathcal{W} = \alpha_{I} I +\widetilde{\mathcal{W}}$ and apply the randomized Pauli estimator to the traceless component $\widetilde{\mathcal{W}}$. This operator is bounded by $[\widetilde{\mathcal{W}}_{min},\widetilde{\mathcal{W}}_{max}]$, which correspond to its minimum and maximum eigenvalues, respectively.
\begin{figure}
\centering
\includegraphics[width=1.0\columnwidth]{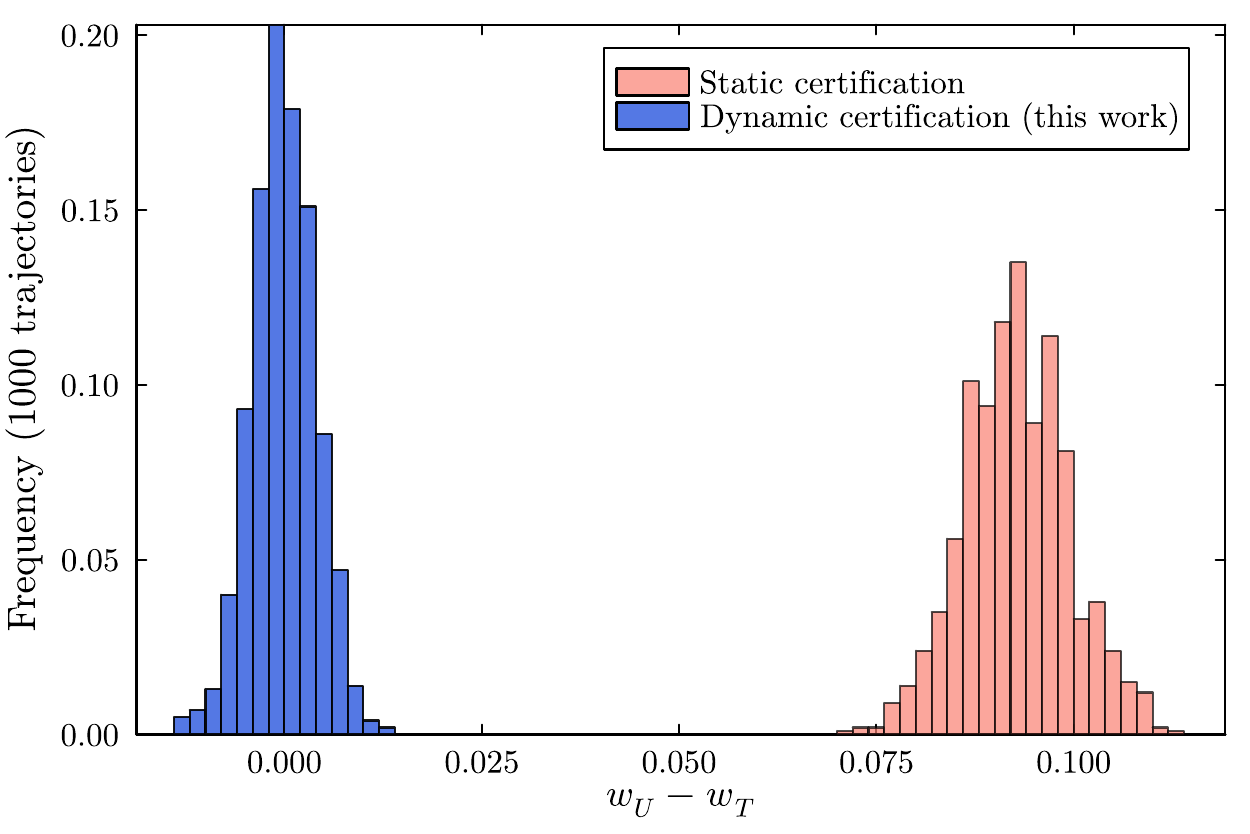}
\caption{
Entanglement-witness certification under source drift. 
We compare the dynamic spot-checking protocol with a static block-certification protocol for the target state \(\ket{\phi}=\frac{\sqrt{3}}{2}\ket{00}+\frac{1}{2}\ket{11}\) and witness \(\mathcal W=\frac{3}{4}\mathds{1}-\ketbra{\phi}\). 
The histograms show the error \(w_U-w_T\), where \(w_U\) is the exact average witness value of the unmeasured states and \(w_T\) is inferred from the measured states. 
The dynamic protocol samples test rounds throughout the trajectory, while the static protocol uses only an initial testing block. 
Results are shown for 1000 trajectories, \(N=10^5\), \(\delta=0.05\), and dynamic test probability \(p=0.26\).
}
 \label{fig:witness}
\end{figure}

We compare our protocol with a static certification approach using the source model of Section~\ref{sec:drift}, with initial phase $\theta_0=0$ and drift rate $\nu=0.001$. In the static scheme, an initial block of states is consumed to estimate the witness using the verification protocol of Section~\ref{sec: verification protocol}. This estimate is then assumed to apply to the remaining states, which are used in a subsequent quantum task (e.g., quantum teleportation). We sample both protocols 1000 times over $N=10^5$ rounds and fix the failure probability to $\delta=0.05$. For our dynamic protocol, we choose the testing probability $p=26 \%$, so that its average empirical error bound, given by the right-hand side of Eq.~\eqref{eq: certification bound thrm 3}, matches the static target error $\epsilon=0.03$ from Eq.~\eqref{eq:sample complexity verification}.

Figure~\ref{fig:witness} compares both protocols through the estimation error $w_U-w_T$. In the static approach, the initial testing block gives an overly optimistic estimate of the witness value. As the source drifts over time, this initial estimate no longer represents the later use states, leading to a systematic discrepancy of approximately $w_U - w_T \simeq 0.09$. More critically, this bias could imply that we are falsely certifying the entanglement of the possibly separable unmeasured states. In contrast, our dynamic certification protocol samples the trajectory throughout the run, producing an error distribution centered around zero and correctly tracking the average witness value of the unmeasured states despite the drift.

\section{Experimental Bell-state certification} \label{sec:exp}

We implement the certification protocol on an IBM quantum processor. We certify the fidelity with respect to the Bell state $\ket{\phi^+}=(\ket{00}+\ket{11})/\sqrt{2}$.  Following the convention of Section~\ref{sec:application}, the randomized measurement protocol is applied to the traceless part
\begin{align}
    W
    =
    |\phi^+\rangle \langle \phi^+|-\frac{1}{4}\mathds{1}
    =
    \frac{1}{4}
    \left(
    XX
    -
    YY
    +
    Z Z
    \right).
\end{align}
Thus $W_{\min}=-1/4$, $W_{\max}=3/4$, and $\|\alpha\|_1=3/4$.

To maximize data collection, we simultaneously prepare Bell states on 64 disjoint connected physical qubit pairs of the IBM quantum processor $\mathrm{ibm\_fez}$, treating each qubit pair as a parallel realization of the certification protocol. This implementation provides a natural scenario for our analysis, since superconducting quantum processors provide a realistic setting in which the i.i.d. assumption cannot be taken for granted. Parallel Bell-state preparations may be affected by residual cross-talk, correlated control errors, calibration drift, and other hardware imperfections \cite{Harper2025Crosstalk}. Such effects are treated as part of the underlying noise process and are naturally incorporated into the certification analysis.

In every round, each prepared Bell pair is independently assigned to a test round with probability $p=0.1$, and to a use round otherwise. As explained in Section~\ref{sec:complexity}, the certification protocol utilizes only the data collected in the test rounds to estimate the average value of the target property for the use-round states, which in this case correspond to the fidelity. In an actual application, these use rounds would be consumed by a subsequent quantum task, such as quantum teleportation, entanglement swapping, or entanglement-based quantum key distribution.

For validation purposes only, we measure the states assigned to the use rounds and estimate their average fidelity. This additional characterization is not required by the certification protocol; it serves solely as a benchmark against which the test round certification estimate can be compared. Since approximately $90\%$ of the prepared states are assigned to use rounds, the average statistical error of the benchmark estimate is $\epsilon \approx 2.3 \times 10^{-3}$, compared with $\epsilon \approx 2.2 \times 10^{-2}$ for the certification estimate obtained from the test rounds. Consequently, the benchmark fidelity provides a high-precision reference for evaluating the performance of the certification protocol.

\begin{figure}[t]
\centering
\includegraphics[width=1.0\columnwidth]{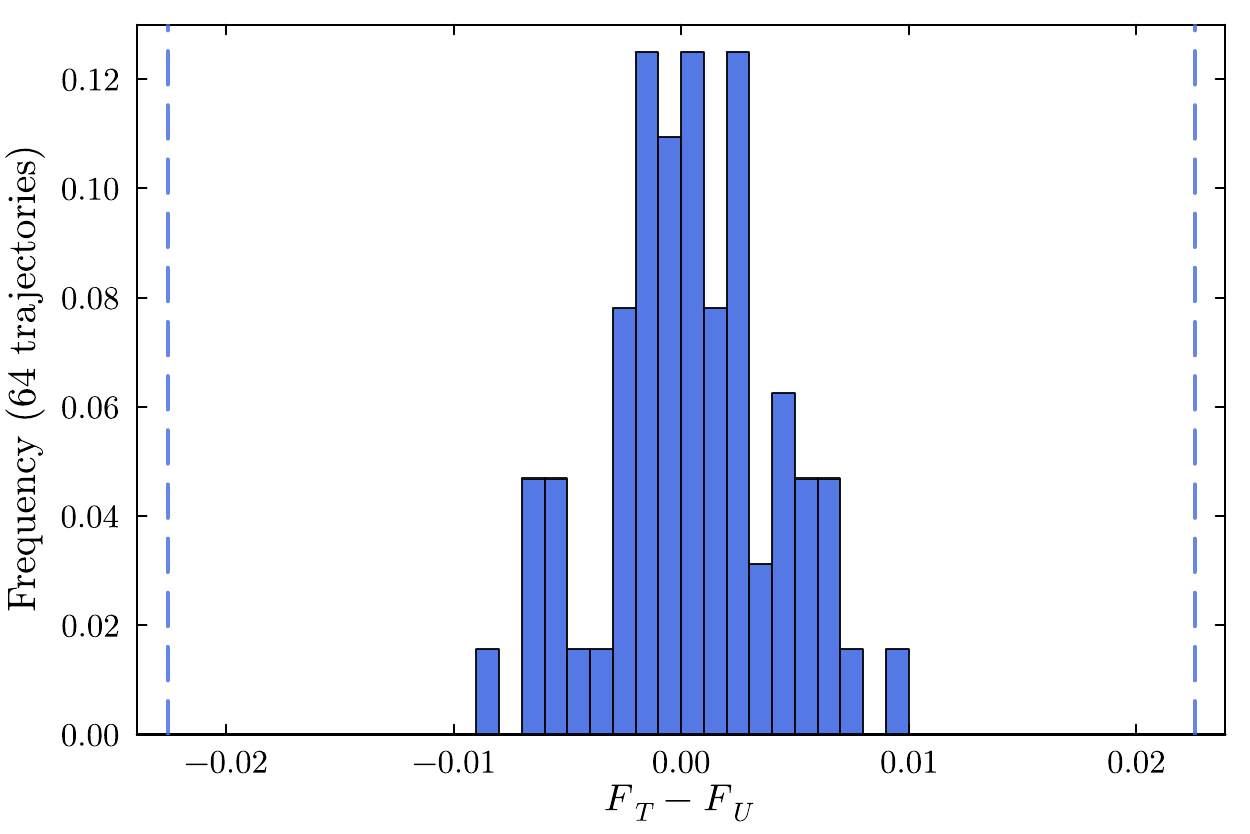}
\caption{
Experimental Bell-state certification on an IBM quantum processor. 
The protocol certifies the average fidelity with respect to \(\ket{\phi^+}=(\ket{00}+\ket{11})/\sqrt{2}\) on 64 disjoint connected physical qubit pairs. 
For each pair, \(N=815000\) Bell states are prepared, with each round assigned independently to the test set with probability \(p=0.1\). 
The figure shows \(F_U-F_T\), where \(F_T\) is the spot-checking certification estimate obtained from the tested states and \(F_U\) is an independently measured benchmark estimate of the average fidelity of the use-round states. 
The benchmark uncertainty is much smaller than the certification radius at the scale shown. 
Dashed vertical lines indicate the certification bounds from Theorem~\ref{thm:certification} with \(\delta=0.05\).
}
 \label{fig:DFE}
\end{figure}

Figure~\ref{fig:DFE} shows the difference between the benchmark use-round fidelity $F_U$ and the certification estimate $F_T$. The dashed lines indicate the certification error predicted by Theorem~\ref{thm:certification}. The protocol was implemented on 64 physical qubit pairs using $N=815000$ rounds per pair and failure probability $\delta=0.05$. The observed agreement between $F_U$ and $F_T$ is consistent with the certification guarantee: the average fidelity of the use-round states is accurately predicted using only the information extracted from the tested subset.

\section{Conclusions} \label{sec:conclusion}

We have introduced a verification and certification framework for quantum states that does not rely on independence or identical-distribution assumptions. By modeling the generated states as a predictable process with respect to a classical filtration, we showed that single-shot unbiased estimators give rise to martingales whose fluctuations can be controlled by concentration inequalities. This yields rigorous confidence intervals for both full verification, where all states are measured, and spot-checking certification, where only a randomly selected subset is tested while the remaining states are reserved for use.

A central consequence is that non-i.i.d.\ certification can be performed without changing the basic structure of standard randomized Pauli protocols. For full verification, the resulting sample complexity matches the usual i.i.d.\ scaling using Hoeffding concentration inequalities. For certification, the protocol provides direct guarantees on the average property of the unmeasured states. This is important in applications where the certified states are themselves the resource, such as entanglement distribution, teleportation and variational quantum simulation.

Several extensions of this work are natural. First, the Azuma-Hoeffding bounds used here could be sharpened using Freedman's inequality, improving the method when the test probability is small. Second, one could develop adaptive versions of the protocol in which the testing probability or the sampled observables are updated during the experiment. Third, the same framework may apply beyond state certification, for instance to process certification, Hamiltonian learning, randomized benchmarking under drift, and cryptographic protocols with memory effects.

Overall, our results show that independence is not a necessary assumption for rigorous quantum certification. What is required is a trusted separable measurement procedure, private randomness, and an estimator whose conditional expectation matches the target property. This provides a practical statistical foundation for certifying quantum devices in regimes where temporal correlations, drift, and history-dependent noise cannot be ignored.

\section*{Code availability}
Code is available via Github at \cite{git}.

% \clearpage
\begin{acknowledgments}
M.N. thanks Miguel Frías Pérez and Luciano Pereira for insightful discussions. This work was supported by the Government of Spain (Severo Ochoa CEX2019-000910-S, FUNQIP, and European Union NextGenerationEU PRTR-C17.I1), Fundació Cellex, Fundació Mir-Puig, Generalitat de Catalunya (CERCA program), the EU Quantera project Veriqtas, the AXA Chair in Quantum Information Science, the ERC AdG CERQUTE and the EU and Spanish AEI project QEC4QEA. M.N. acknowledges funding from the European Union’s Horizon Europe research and innovation programme under the Marie Skłodowska-Curie Grant Agreement No. 101081441.
\end{acknowledgments}

\bibliography{biblio}

@article{brunner2014bell,
  title = {Bell nonlocality},
  author = {Brunner, Nicolas and Cavalcanti, Daniel and Pironio, Stefano and Scarani, Valerio and Wehner, Stephanie},
  journal = {Rev. Mod. Phys.},
  volume = {86},
  issue = {2},
  pages = {419--478},
  numpages = {60},
  year = {2014},
  month = {Apr},
  publisher = {American Physical Society},
  doi = {10.1103/RevModPhys.86.419},
  url = {https://link.aps.org/doi/10.1103/RevModPhys.86.419}
}

@article{eisert2020quantum,
  title={Quantum certification and benchmarking},
  author={Eisert, Jens and Hangleiter, Dominik and Walk, Nathan and Roth, Ingo and Markham, Damian and Parekh, Rhea and Chabaud, Ulysse and Kashefi, Elham},
  journal={Nat. Rev. Phys.},
  volume={2},
  number={7},
  pages={382--390},
  year={2020},
  publisher={Nature Publishing Group UK London}, 
  doi={10.1038/s42254-020-0186-4}
}

@article{pirandola2020advances,
  title={Advances in quantum cryptography},
  author={Pirandola, Stefano and Andersen, Ulrik L and Banchi, Leonardo and Berta, Mario and Bunandar, Darius and Colbeck, Roger and Englund, Dirk and Gehring, Tobias and Lupo, Cosmo and Ottaviani, Carlo and others},
  journal={Adv. Opt. Photonics},
  volume={12},
  number={4},
  pages={1012--1236},
  year={2020},
  publisher={Optical Society of America},
  doi={10.1364/AOP.361502}
}

@article{cramer2010efficient,
  title={Efficient quantum state tomography},
  author={Cramer, Marcus and Plenio, Martin B and Flammia, Steven T and Somma, Rolando and Gross, David and Bartlett, Stephen D and Landon-Cardinal, Olivier and Poulin, David and Liu, Yi-Kai},
  journal={Nat. Commun.},
  volume={1},
  number={1},
  pages={149},
  year={2010},
  publisher={Nature Publishing Group UK London},
  doi={10.1038/ncomms1147}
}

@article{georgescu2014quantum,
  title = {Quantum simulation},
  author = {Georgescu, I. M. and Ashhab, S. and Nori, Franco},
  journal = {Rev. Mod. Phys.},
  volume = {86},
  issue = {1},
  pages = {153--185},
  numpages = {33},
  year = {2014},
  month = {Mar},
  publisher = {American Physical Society},
  doi = {10.1103/RevModPhys.86.153},
  url = {https://link.aps.org/doi/10.1103/RevModPhys.86.153}
}

@article{farhi2014quantum,
  title={A quantum approximate optimization algorithm},
  author={Farhi, Edward and Goldstone, Jeffrey and Gutmann, Sam},
  journal={arXiv:1411.4028},
  year={2014},
  url={https://doi.org/10.48550/arXiv.1411.4028}
}

@article{hradil1997quantum,
  title = {Quantum-state estimation},
  author = {Hradil, Z.},
  journal = {Phys. Rev. A},
  volume = {55},
  issue = {3},
  pages = {R1561--R1564},
  numpages = {0},
  year = {1997},
  month = {Mar},
  publisher = {American Physical Society},
  doi = {10.1103/PhysRevA.55.R1561},
  url = {https://link.aps.org/doi/10.1103/PhysRevA.55.R1561}
}

@article{guta2020fast,
  title = {Fast state tomography with optimal error bounds},
  author = {Guţă, M and Kahn, J and Kueng, R and Tropp, J A},
  journal = {J. Phys. A: Math. Theor},
  volume = {53},
  number = {20},
  pages = {204001},
  year = {2020},
  publisher = {IOP Publishing},
  doi = {10.1088/1751-8121/ab8111}
}

@article{anshu2024survey,
  title={A survey on the complexity of learning quantum states},
  author={Anshu, Anurag and Arunachalam, Srinivasan},
  journal={Nat. Rev. Phys.},
  volume={6},
  number={1},
  pages={59--69},
  year={2024},
  publisher={Nature Publishing Group UK London},
  doi={10.1038/s42254-023-00662-4}
}

@article{huang2020predicting,
  title={Predicting many properties of a quantum system from very few measurements},
  author={Huang, Hsin-Yuan and Kueng, Richard and Preskill, John},
  journal={Nat. Phys.},
  volume={16},
  number={10},
  pages={1050--1057},
  year={2020},
  publisher={Nature Publishing Group UK London},
  doi= {10.1038/s41567-020-0932-7}
}

@article{flammia2011direct,
  title = {Direct Fidelity Estimation from Few {P}auli Measurements},
  author = {Flammia, Steven T. and Liu, Yi-Kai},
  journal = {Phys. Rev. Lett.},
  volume = {106},
  issue = {23},
  pages = {230501},
  numpages = {4},
  year = {2011},
  month = {Jun},
  publisher = {American Physical Society},
  doi = {10.1103/PhysRevLett.106.230501},
  url = {https://link.aps.org/doi/10.1103/PhysRevLett.106.230501}
}

@article{da2011practical,
  title = {Practical Characterization of Quantum Devices without Tomography},
  author = {da Silva, Marcus P. and Landon-Cardinal, Olivier and Poulin, David},
  journal = {Phys. Rev. Lett.},
  volume = {107},
  issue = {21},
  pages = {210404},
  numpages = {5},
  year = {2011},
  month = {Nov},
  publisher = {American Physical Society},
  doi = {10.1103/PhysRevLett.107.210404},
  url = {https://link.aps.org/doi/10.1103/PhysRevLett.107.210404}
}

@article{miguel2022collective,
  title = {Collective Operations Can Exponentially Enhance Quantum State Verification},
  author = {Miguel-Ramiro, Jorge and Riera-S\`abat, Ferran and D\"ur, Wolfgang},
  journal = {Phys. Rev. Lett.},
  volume = {129},
  issue = {19},
  pages = {190504},
  numpages = {6},
  year = {2022},
  month = {Oct},
  publisher = {American Physical Society},
  doi = {10.1103/PhysRevLett.129.190504},
  url = {https://link.aps.org/doi/10.1103/PhysRevLett.129.190504}
}

@article{riera2023nondestructive,
  title = {Nondestructive verification of entangled states via fidelity witnessing},
  author = {Riera-S\`abat, Ferran and Miguel-Ramiro, Jorge and D\"ur, Wolfgang},
  journal = {Phys. Rev. A},
  volume = {107},
  issue = {2},
  pages = {022414},
  numpages = {13},
  year = {2023},
  month = {Feb},
  publisher = {American Physical Society},
  doi = {10.1103/PhysRevA.107.022414},
  url = {https://link.aps.org/doi/10.1103/PhysRevA.107.022414}
}

@article{barbera2025sampling,
  doi = {10.22331/q-2025-07-03-1784},
  url = {https://doi.org/10.22331/q-2025-07-03-1784},
  title = {Sampling Groups of {P}auli Operators to Enhance Direct Fidelity Estimation},
  author = {Barber{\`{a}}-Rodr{\'{i}}guez, J{\'{u}}lia and Navarro, Mariana and Zambrano, Leonardo},
  journal = {{Quantum}},
  issn = {2521-327X},
  publisher = {{Verein zur F{\"{o}}rderung des Open Access Publizierens in den Quantenwissenschaften}},
  volume = {9},
  pages = {1784},
  month = jul,
  year = {2025}
}

@book{vershynin2018high, 
  title={High-Dimensional Probability: An Introduction with Applications in Data Science}, 
  publisher={Cambridge University Press}, 
  author={Vershynin, Roman}, 
  year={2018}, 
  doi={10.1017/9781108231596}
}

@article{schwonnek2021robust,
  title={Device-independent quantum key distribution with random key basis},
  author={Schwonnek, Ren{\'e} and Goh, Koon Tong and Primaatmaja, Ignatius W and Tan, Ernest Y-Z and Wolf, Ramona and Scarani, Valerio and Lim, Charles C-W},
  journal={Nat. Commun.},
  volume={12},
  number={1},
  pages={2880},
  year={2021},
  publisher={Nature Publishing Group UK London},
  doi={10.1038/s41467-021-23147-3}
}

@article{zhang2022device,
  title = {Device-Independent Quantum Key Distribution with Random Postselection},
  author = {Xu, Feihu and Zhang, Yu-Zhe and Zhang, Qiang and Pan, Jian-Wei},
  journal = {Phys. Rev. Lett.},
  volume = {128},
  issue = {11},
  pages = {110506},
  numpages = {5},
  year = {2022},
  month = {Mar},
  publisher = {American Physical Society},
  doi = {10.1103/PhysRevLett.128.110506},
  url = {https://link.aps.org/doi/10.1103/PhysRevLett.128.110506}
}

@article{white2022nonmarkovian,
  title = {Non-{M}arkovian Quantum Process Tomography},
  author = {White, G.A.L. and Pollock, F.A. and Hollenberg, L.C.L. and Modi, K. and Hill, C.D.},
  journal = {PRX Quantum},
  volume = {3},
  issue = {2},
  pages = {020344},
  numpages = {30},
  year = {2022},
  month = {May},
  publisher = {American Physical Society},
  doi = {10.1103/PRXQuantum.3.020344},
  url = {https://link.aps.org/doi/10.1103/PRXQuantum.3.020344}
}

@book{wiseman2009quantum,
  title={Quantum measurement and control},
  author={Wiseman, Howard M and Milburn, Gerard J},
  year={2009},
  publisher={Cambridge university press},
  doi={10.1017/CBO9780511813948}
}

@article{pichler2017universal,
  title={Universal photonic quantum computation via time-delayed feedback},
  author={Pichler, Hannes and Choi, Soonwon and Zoller, Peter and Lukin, Mikhail D},
  journal={Proc. Natl. Acad. Sci. U.S.A.},
  volume={114},
  number={43},
  pages={11362--11367},
  year={2017},
  publisher={National Academy of Sciences},
  doi={10.1073/pnas.1711003114}
}

@article{klimov2018fluctuations,
  title={Fluctuations of energy-relaxation times in superconducting qubits},
  author={Klimov, Paul V and Kelly, Julian and Chen, Zijun and Neeley, Matthew and Megrant, Anthony and Burkett, Brian and Barends, Rami and Arya, Kunal and Chiaro, Ben and Chen, Yu and others},
  journal = {Phys. Rev. Lett.},
  volume={121},
  number={9},
  pages={090502},
  year={2018},
  publisher={APS},
  doi = {10.1103/PhysRevLett.121.090502}
}

@article{van2013quantum,
  title={When quantum tomography goes wrong: drift of quantum sources and other errors},
  author={van Enk, Steven J and Blume-Kohout, Robin},
  journal={New J. Phys.},
  volume={15},
  number={2},
  pages={025024},
  year={2013},
  publisher={IOP Publishing},
  doi={10.1088/1367-2630/15/2/025024}
}

@article{bavaresco2018measurements,
  title={Measurements in two bases are sufficient for certifying high-dimensional entanglement},
  author={Bavaresco, Jessica and Herrera Valencia, Natalia and Kl{\"o}ckl, Claude and Pivoluska, Matej and Erker, Paul and Friis, Nicolai and Malik, Mehul and Huber, Marcus},
  journal={Nat. Phys.},
  volume={14},
  number={10},
  pages={1032--1037},
  year={2018},
  publisher={Nature Publishing Group UK London},
  doi={10.1038/s41567-018-0203-z}
}

@article{Mannalath2025Sharp,
  title = {Sharp Finite Statistics for Quantum Key Distribution},
  author = {Mannalath, Vaisakh and Zapatero, V\'{\i}ctor and Curty, Marcos},
  journal = {Phys. Rev. Lett.},
  volume = {135},
  issue = {2},
  pages = {020803},
  numpages = {7},
  year = {2025},
  month = {Jul},
  publisher = {American Physical Society},
  doi = {10.1103/l735-x48g},
  url = {https://link.aps.org/doi/10.1103/l735-x48g}
}

@article{fawzi2024learning,
  title={Learning properties of quantum states without the IID assumption},
  author={Fawzi, Omar and Kueng, Richard and Markham, Damian and Oufkir, Aadil},
  journal={Nat. Commun.},
  volume={15},
  number={1},
  pages={9677},
  year={2024},
  publisher={Nature Publishing Group UK London},
  doi={10.1038/s41467-024-53765-6}
}

@article{zambrano2026quantum,
  title={Quantum tomography for non-iid sources},
  author={Zambrano, Leonardo},
  journal={arXiv:2602.22057},
  year={2026},
  url={https://doi.org/10.48550/arXiv.2602.22057
}
}

@article{zambrano2026classical,
  title={Classical shadows for non-iid quantum sources},
  author={Zambrano, Leonardo},
  journal={arXiv:2603.05137},
  year={2026},
  url={https://doi.org/10.48550/arXiv.2603.05137}
}

@article{neven2021symmetry,
  title={Symmetry-resolved entanglement detection using partial transpose moments},
  author={Neven, Antoine and Carrasco, Jose and Vitale, Vittorio and Kokail, Christian and Elben, Andreas and Dalmonte, Marcello and Calabrese, Pasquale and Zoller, Peter and Vermersch, Benoȋt and Kueng, Richard and others},
  journal={npj Quantum Inf.},
  volume={7},
  number={1},
  pages={152},
  year={2021},
  publisher={Nature Publishing Group UK London},
  doi={10.1038/s41534-021-00487-y}
}

@article{gocanin2022sample,
  title = {Sample-Efficient Device-Independent Quantum State Verification and Certification},
  author = {Go\ifmmode \check{c}\else \v{c}\fi{}anin, Aleksandra and \ifmmode \check{S}\else \v{S}\fi{}upi\ifmmode \acute{c}\else \'{c}\fi{}, Ivan and Daki\ifmmode \acute{c}\else \'{c}\fi{}, Borivoje},
  journal = {PRX Quantum},
  volume = {3},
  issue = {1},
  pages = {010317},
  numpages = {15},
  year = {2022},
  month = {Feb},
  publisher = {American Physical Society},
  doi = {10.1103/PRXQuantum.3.010317},
  url = {https://link.aps.org/doi/10.1103/PRXQuantum.3.010317}
}

@article{antesberger2026experimental,
  title={Experimental quantum state certification by actively sampling photonic entangled states},
  author={Antesberger, Michael and Schmid, Mariana ME and Cao, Huan and Daki{\'c}, Borivoje and Rozema, Lee A and Walther, Philip},
  journal={Sci. Adv.},
  volume={12},
  number={7},
  pages={eaea4144},
  year={2026},
  publisher={American Association for the Advancement of Science},
  doi={10.1126/sciadv.aea4144}
}

@article{knill2020generation,
  title = {Generation of quantum randomness by probability estimation with classical side information},
  author = {Knill, Emanuel and Zhang, Yanbao and Bierhorst, Peter},
  journal = {Phys. Rev. Res.},
  volume = {2},
  issue = {3},
  pages = {033465},
  numpages = {41},
  year = {2020},
  month = {Sep},
  publisher = {American Physical Society},
  doi = {10.1103/PhysRevResearch.2.033465},
  url = {https://link.aps.org/doi/10.1103/PhysRevResearch.2.033465}
}

@article{zhang2026efficient,
  title={An efficient method for spot-checking quantum properties with sequential trials},
  author={Zhang, Yanbao and Seshadri, Akshay and Knill, Emanuel},
  journal={arXiv:2602.08114},
  year={2026},
  url={https://doi.org/10.48550/arXiv.2602.08114}
}

@article{shah2024introductionmartingales,
      title={Introduction to Martingales}, 
      author={Rohan Shah},
      year={2024},
      journal={arXiv:2407.11914},
      url={https://arxiv.org/abs/2407.11914}, 
}

@article{fawzi2026optimal,
  title = {Optimal Fidelity Estimation from Binary Measurements for Discrete and Continuous Variable Systems},
  author = {Fawzi, Omar and Oufkir, Aadil and Salzmann, Robert},
  journal = {PRX Quantum},
  volume = {7},
  issue = {1},
  pages = {010309},
  numpages = {32},
  year = {2026},
  month = {Jan},
  publisher = {American Physical Society},
  doi = {10.1103/qd1c-1fk9},
  url = {https://link.aps.org/doi/10.1103/qd1c-1fk9}
}

@book{williams1991probability,
  title={Probability with martingales},
  author={Williams, David},
  year={1991},
  publisher={Cambridge university press},
  doi={10.1017/CBO9780511813658}
}

@article{raginsky2015concentration,
      title={Concentration of Measure Inequalities in Information Theory, Communications and Coding (Second Edition)}, 
      author={Maxim Raginsky and Igal Sason},
      year={2012},
      journal={arXiv:1212.4663},
      url={https://arxiv.org/abs/1212.4663}, 
}

@article{horvitz1952generalization,
  title={A generalization of sampling without replacement from a finite universe},
  author={Horvitz, Daniel G and Thompson, Donovan J},
  journal={J. Am. Stat. Assoc.},
  volume={47},
  number={260},
  pages={663--685},
  year={1952},
  publisher={Taylor \& Francis},
  doi={10.1080/01621459.1952.10483446}
}

@article{Harper2025Crosstalk,
author = {Harper, Benjamin and Tonekaboni, Behnam and Goldozian, Bahar and Sevior, Martin and Usman, Muhammad},
title = {Crosstalk Attacks and Defence in a Shared Quantum Computing Environment},
journal = {Adv. Quantum Technol.},
volume = {8},
number = {10},
pages = {e2500009},
doi = {https://doi.org/10.1002/qute.202500009},
url = {https://advanced.onlinelibrary.wiley.com/doi/abs/10.1002/qute.202500009},
year = {2025}
}

@software{git,
  title = {Certifying quantum states without independence assumptions - {Github repository}},
  year = {2026},
  publisher = {GitHub},
  journal = {GitHub repository},
  url = {https://github.com/MarianaNvrr/Certifying-quantum-states-without-independence-assumptions}
}

\end{document}